\RequirePackage{fix-cm} %
\documentclass{sig-alternate-05-2015}

\newif\ifshort
\shortfalse %
\ifshort
\newcommand{\shortversion}{true}
\else
\newcommand{\shortversion}{false}
\fi
\newcommand{\shortver}[1]{\ifthenelse{\equal{\shortversion}{true}}{{#1}}{}}
\newcommand{\longver}[1] {\ifthenelse{\equal{\shortversion}{false}}{{#1}}{}}

\ifshort
\else
\pagenumbering{arabic}
\fi

\usepackage[small,bf]{caption} 
\usepackage[labelformat=simple]{subcaption}

\usepackage{times}

\newcommand{\descr}[1]{\smallskip \noindent \textbf{#1}}

\newtheorem{theorem}{Theorem}

\usepackage{color}
\usepackage[usenames,dvipsnames]{xcolor}
\usepackage{balance}

\usepackage{algorithmic}
\usepackage[ruled, noend,lined,linesnumbered]{algorithm2e} 

\newcommand{\EMB}{\ensuremath{\mbox{$\mathcal{A}$}}}
\newcommand{\CMB}{\ensuremath{\mbox{$\mathcal{B}$}}}
\newcommand{\TMB}{\ensuremath{\mbox{$\mathcal{C}$}}}
\usepackage{url}
	\makeatletter
	\def\url@leostyle{%
	 \@ifundefined{selectfont}{\def\UrlFont{\normalsize}}%
	 {\def\UrlFont{\normalsize}}%
	}
	\makeatother	
\urlstyle{leo}

\usepackage{cite}	%
\usepackage{graphicx}
\graphicspath{{figures/}}
\definecolor{darkgreen}{RGB}{47,109,79}
\definecolor{darkblue}{RGB}{57,79,99}
\usepackage[bookmarks=false, colorlinks=true, plainpages = false, linkcolor = darkgreen,  citecolor = darkgreen, urlcolor = darkblue, filecolor = blue]{hyperref}

\newcommand{\subparagraph}{}

\begin{document}

\CopyrightYear{2016} \setcopyright{acmcopyright}
\conferenceinfo{HotMiddlebox'16}{August 22-26, 2016, Florianopolis, Brazil}
\isbn{978-1-4503-4078-6/16/03}
\acmPrice{\$15.00}
\doi{http://dx.doi.org/10.1145/2876019.2876021} 

\ifshort
\else
\makeatletter
\def\@copyrightspace{\relax}
\makeatother
\fi

\clubpenalty=10000 
\widowpenalty=10000

\sloppy

\title{SplitBox: Toward Efficient Private Network\\Function Virtualization\ifshort\else\titlenote{\small An earlier version of this paper appeared in the Proceedings of the ACM SIGCOMM Workshop on Hot Topics in Middleboxes and Network Function Virtualization (HotMiddleBox 2016).}\fi}

\numberofauthors{6}

\author{
\alignauthor
Hassan Jameel Asghar\\
		\affaddr{Data61, CSIRO}\\
    	\affaddr{hassan.asghar@data61.csiro.au}
\alignauthor
Luca Melis\\
		\affaddr{University College London}\\
		\affaddr{luca.melis.14@ucl.ac.uk} 
\alignauthor
Cyril Soldani\\
		\affaddr{University of Li\'{e}ge}\\
		\affaddr{cyril.soldani@ulg.ac.be}
\and
\alignauthor
Emiliano De Cristofaro\\
		\affaddr{University College London}\\
		\affaddr{e.decristofaro@ucl.ac.uk}
\alignauthor
Mohamed Ali Kaafar\\
		\affaddr{Data61, CSIRO}\\
		\affaddr{dali.kaafar@data61.csiro.au}
\alignauthor
Laurent Mathy\\
		\affaddr{University of Li\'{e}ge}\\
		\affaddr{laurent.mathy@ulg.ac.be}
}

\maketitle
\begin{abstract}
This paper presents SplitBox, a scalable system for privately processing network functions that are outsourced as software processes to the cloud. Specifically, providers processing the network functions do not learn the network policies instructing how the functions are to be processed. We first propose an abstract model of a generic network function based on match-action pairs, assuming that this is processed in a distributed manner by multiple honest-but-curious providers. Then, we introduce our SplitBox system for private network function virtualization and present a proof-of-concept implementation on FastClick -- an extension of the Click modular router -- using a firewall as a use case. Our experimental results show that SplitBox achieves a throughput of over 2~Gbps with 1~kB-sized packets on average, traversing up to 60 firewall rules.
\end{abstract}

\ifshort
\begin{CCSXML}
<ccs2012>
<concept>
<concept_id>10002978.10003014.10003015</concept_id>
<concept_desc>Security and privacy~Security protocols</concept_desc>
<concept_significance>500</concept_significance>
</concept>
<concept>
<concept_id>10003033.10003058.10003063</concept_id>
<concept_desc>Networks~Middleboxes / network appliances</concept_desc>
<concept_significance>500</concept_significance>
</concept>
</ccs2012>
\end{CCSXML}

\ccsdesc[500]{Security and privacy~Security protocols}
\ccsdesc[500]{Networks~Middleboxes / network appliances}

\printccsdesc

\keywords{Middlebox Privacy; Secret Sharing; Network Function Virtualization; Firewalls}
\fi

\section{Introduction}
Network function virtualization (NFV) is increasingly being adopted by organizations worldwide, 
moving network functions traditionally implemented on hardware middleboxes (MBs) -- e.g., firewalls, NAT, intrusion detection systems --
to flexible and easier to maintain software processes. Network functions can thus be executed on virtual machines (VMs), 
with cloud providers processing traffic destined to, or originating from, an enterprise network (the client) based on a set of policies governing the network functions. This, however, implies that confidential information as well as sensitive network policies (e.g., the firewall rules) are revealed to the cloud, whereas in the traditional setting, such policies would only be known to the client's network administrators. Disclosing such policies can reveal sensitive details such as  the IP addresses of hosts, the topology of the client's private network, and/or important practices~\cite{bf-firewall, mlm-firewall}. 

This motivates the need to allow processing outsourced network functions without revealing the policies: we denote this problem as {\em Private Network Function Virtualization} (PNFV), as done in~\cite{central-pnfv}. 
We argue that PNFV solutions should not only provide strong {\em security} guarantees, but also satisfy {\em compatibility} with existing infrastructures (e.g., not requiring third parties, sending/receiving traffic, take part in new protocols) as well as {\em high throughput} in order to match the quality of service expected of network functions. In practice, this precludes the use of some standard cryptographic tools as well as other approaches which we review in Section~\ref{sec:related}.

Several attempts have recently been made to support PNFV or similar functionalities~\cite{bf-firewall, mlm-firewall, central-pnfv, embark}, assuming the cloud to be honest-but-curious (i.e., the cloud processes the network functions as instructed but may try to learn the underlying policies). However, none of these simultaneously achieve
security, compatibility, and high throughput, or their coverage of network functions is limited as they are only applicable to firewall rules that either allow or drop a packet. 

Our intuition is to leverage the distributed nature of cloud VMs: rather than assuming that a single VM processes a client's network function, we distribute the functionality to several VMs residing on multiple clouds or multiple compute nodes in the same cloud. Assuming that not all VMs in the cloud are simultaneously under the control of the adversary (for instance, a \emph{passive} attacker cannot gain access to all nodes running the distributed VMs), we are able to provide a scalable and secure solution. As discussed throughout the paper, achieving this solution is not straightforward and, in the process, we overcome several challenges.

We start by presenting an abstract definition of a network function. Then, we introduce a novel system, which we name SplitBox, geared to privately and efficiently compute this abstract network function in such a way that the cloud, comprising of several middleboxes implemented as VMs, cannot learn the policies. Finally, we implement and evaluate SplitBox on a firewall test case, showing that it can achieve 
a throughput of over 2~Gbps with 1~kB-sized packets, on average, traversing up to 60 rules.

\section{Related Work}
\label{sec:related}
Khakpour and Liu~\cite{bf-firewall} present a scheme based on Bloom Filters (BFs) to privately outsource 
firewalls. Besides only considering one use case, 
their solution is not provably secure as BFs are not {\em one-way}.
\longver{Furthermore BFs inevitably introduce false positives, i.e., packets might accidentally be matched against a firewall rule.}
Privately outsourcing firewalls is also considered by Shi et al.~\cite{mlm-firewall}, who 
rely on CLT multilinear maps~\cite{mlm-integers}, which have been shown to be
insecure~\cite{mlm-cryptanalysis}.%
\longver{More specifically, the \texttt{isZero} routine of CLT maps, used in~\cite{mlm-firewall} to check whether a packet matches a policy, is not secure. Additionally note that both~\cite{bf-firewall,mlm-firewall} do not consider network functions that modify packet contents, whereas, we aim to cover a broader range of network functions including but not limited to firewalls.}
Jagadeesan et al.~\cite{hotsdn14} introduce a secure multi controller architecture for SDNs based on secure multi-party computation, which can potentially be employed for NFV. 
\longver{They provide a proof of concept implementation for identifying heavy hitters in a network consisting of two controllers. However, it}
\shortver{However, their implementation}%
takes more than 13 minutes to execute with 4096 flow table entries.
Melis et al.~\cite{central-pnfv} investigate the feasibility of provably-secure PNFV for generic network 
functions: they introduce two constructions based on fully homomorphic encryption and public-key encryption with keyword search (PEKS)~\cite{peks}, however, with high computational and communication overhead (e.g., it takes at least 250ms in their experiments to process 10 firewall rules) which makes it unfeasible for real-world deployment.

Blindbox~\cite{blindbox} considers a setting in which a sender (S) and a receiver (R) communicate via HTTPS through a middlebox (MB) which has a set of rules for packet inspection that only it knows. The MB should not be able to decrypt traffic between S and R, while S and R should not learn the rules. Although Blindbox achieves a 166Mbps throughput, 
it operates in a different setting than ours, in which R should set and know the rules (policies), while S and MB should not.
\shortver{Also, it only considers actions limited to drop, allow, or report, while we also consider modifying packet contents.}
Furthermore, the HTTPS connection setup requires around 1.5 minutes with thousands of rules, which suggests that BlindBox may not be practical for applications %
with short-lived connections.
\longver{Lin et al.~\cite{icc16} also propose a privacy-preserving deep packet filtering technique (DPF-ET) where the packet data is hidden from the network owner and the users do not learn the filtering rules. 
Compared to BlindBox, DPF-ET significantly reduces the setup overhead and requires a filtering time for each packet of $5{\mu}s$ for matching a thousand rules with a 32-bit rule length.
Both Blindbox and DPF-ET only consider middlebox actions that are limited to drop, allow or report to network administrator, without defining action as modifying packet contents (e.g., for a NAT) as is done in our paper.}

Finally, Embark~\cite{embark} enables a cloud provider to support middlebox outsourcing, such as firewalls and NATs, while maintaining confidentiality of an enterprise's network packets and policies.
\longver{Embark employs the same architecture as APLOMB~\cite{aplomb}, where the middlebox functionalities (e.g. firewall) are outsourced to the cloud by the enterprises without greatly damaging throughput, but it encrypts the traffic going to the service provider (SP) in order to protect privacy. 
To this end, Embark relies on symmetric-key encryption and introduces a novel scheme PrefixMatch used to encrypt a set of rules for a middlebox type. The encrypted rules are generated by the enterprise(s) and then provided to the SP at setup time. 
The cloud middleboxes at SP then process the encrypted traffic against the encrypted rules, and send back the produced encrypted traffic to the enterprise who, finally, performs the decryption.
When compared to Blindbox, Embark achieves better performance, as it does not require per-user-connection overhead, and broader functionality. A key difference between Embark and our solution is that we allow complex actions (on top of allow/block) to be performed on the packet without revealing them to the cloud, e.g., NAT rules. Embark can only do so in the clear.}%
\shortver{Specifically, it uses symmetric-key encryption to allow communication between enterprises and third-parties or enterprise-to-enterprise. A key difference between Embark and our solution is that we allow complex actions (besides allow/block) to be performed on the packet without revealing them to the cloud, e.g., NAT rules, while Embark can only do so in the clear.}

\section{Preliminaries}

\begin{figure}[!t]
\centering
\resizebox{1.035\columnwidth}{!}{\input{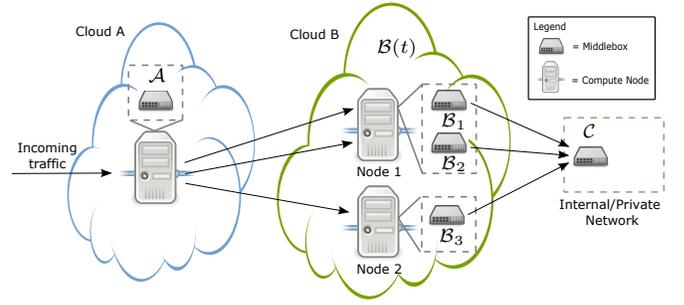}}
\caption{Our system model with Cloud A hosting MB $\EMB$ as a VM in one of its compute nodes. 
Cloud B hosts the MBs $\CMB(t)$ with $t = 3$ as VMs (not all $t$ reside on the same compute node). 
The client MB $\TMB$ resides at the edge of the client's internal network. $\EMB$ and $\CMB(t)$ collaboratively compute network functions for the client.}
\label{fig:pnfv-overview}
\shortver{\vspace{-0.2cm}}
\end{figure}  

\longver{\subsection{System and Trust Model}}
\shortver{\descr{System and Trust Model.}}
Figure~\ref{fig:pnfv-overview} illustrates our PNFV model, consisting of two types of cloud middleboxes (MBs): an \emph{entry} MB $\EMB$ and $t \ge 2$ cloud MBs $\CMB(t)$, which collaboratively compute a network function on behalf of a client. The client has its own MB, denoted $\TMB$, at the edge of its internal network. $\EMB$ receives an incoming packet, does some computations on it, ``splits'' the result into $t$ parts, and forwards part $j$ to $\CMB_j\in\CMB(t)$. $\CMB_j$ performs local computations and forwards its part to $\TMB$, which reconstructs the network function's final result. 
\longver{There is also a direct link between $\EMB$ and $\TMB$.}

\descr{Assumptions.} We assume an honest-but-curious adversary which can corrupt\footnote{\small The adversary may change the behavior of a MB from honest to honest-but-curious.} either $\EMB$ or up to $t-1$ MBs from $\CMB(t)$, and it cannot corrupt $\EMB$ and any MB in $\CMB(t)$ simultaneously. In practice, one can assume $\EMB$ to be running on a different cloud provider than $\CMB(t)$ and that not all MBs in $\CMB(t)$ reside on the same node.
\longver{Since $\TMB$ is the client's MB, we do not assume it to be adversarial.}

\longver{\subsection{Network Functions}}
\shortver{\descr{Network Functions.}}
We define a packet $x$ as a binary string of arbitrary length, i.e., $x \in \{0, 1\}^*$. Our network functions will be applicable to the first $n$ bits of $x$.
\longver{
If $|x| < n$, $x$ is prefixed with zeros to make it of length $n$.
}
A \textit{matching} function is a boolean function $m : \{0, 1\}^{n} \rightarrow \{0, 1\}$.
Its complement, i.e., the function $1 - m$, is denoted by $\overline{m}$. 
An \textit{action} function is a transformation $a :  \{0, 1\}^{n} \rightarrow \{0, 1\}^{n}$. 
$m(x)$ (resp., $a(x)$) denote evaluating $m$ (resp., $a$) on the substring $x(1, n)$ (i.e., the first $n$ bits of $x$). If $|x| > n$, $a$ keeps the part $x(n + 1, *)$ of $x$ unaltered. We also define the identity action function $I(x) = x$. 

Let $M$ and $A$ be finite sets of matching and action functions, with $I \in A$. A \textit{network} function $\psi = (M, A)$ is a binary tree with edge set $M$ and node set $A$ such that each node is an action function $a \in A$ and each edge is either a matching function $m \in M$ or a complement $\overline{m}$ of a matching function $m \in M$. A node is either a leaf node or a parent node. A parent node has two child nodes. The left child node is the identity action function $I$. The edge connecting the right child node is a matching function $m \in M$, whereas the edge connecting the left child node is its complement $\overline{m}$. The root node is the identity action function $I$. \longver{Examples of network functions are in Figure~\ref{fig:nf-binary-trees}.}
Clearly, there exists a binary relation from $M$ to $A$, such that for each $(m, a)$ from this relation there exists a parent node in $\psi$ such that
the left child is connected via the edge $\overline{m}$ and the right child via the edge $m$, and the right child is $a$.

We call each pair $(m, a)$ in $\psi$ a \emph{policy}.
\longver{A policy can also be represented as a subtree of $\psi$ as shown in Figure~\ref{subfig:policy-binary-tree}.} 
Policies serve as building blocks of a network function.
The set of policies of $\psi$ is the set of \emph{distinct} policies $(m, a)$ in $\psi$. A network function is evaluated on input $x \in \{0, 1\}^*$, denoted $\psi(x)$, using Algorithm~\ref{alg:traversal}.
\longver{Note that the reason to create a separate writeable copy ${x}_\mathtt{w}$ of $x$ is to ensure that the matching functions are applied on the ``unmodified'' $x$, i.e., $x_\mathtt{r}$, and not on ${x}_\mathtt{w}$ which is modified by the action functions. When a leaf node is entered, we say that the network function has terminated. %
}
Figure~\ref{subfig:nf-binary-tree-unbalanced} shows a network function with $k$ distinct policies: whenever a match is found, the corresponding action is performed and the function terminates. The function in Figure~\ref{subfig:nf-binary-tree-balanced-k3} has $3$ distinct policies, $(m_1, a_1), (m_2, a_2)$ and $(m_3, a_3)$, and $(m_2, a_2)$ is repeated twice. This function does not terminate immediately after a match has been found (e.g., path $m_1m_2$). Since $a \circ I = I \circ a = a$, we can easily ``plug'' individual policy trees %
to construct more complex network functions.

\begin{algorithm}[ttt]
\caption{\texttt{Traversal}}
\label{alg:traversal}
\SetAlgoLined
\SetCommentSty{mycommfont}
\SetAlCapSkip{1em}
\DontPrintSemicolon{}
\SetKwInOut{Input}{Input}
\let\oldnl\nl%
\newcommand{\nonl}{\renewcommand{\nl}{\let\nl\oldnl}}%
\Input{Packet $x$, network function $\psi$.}
Make a read-only copy $x_\mathtt{r}$ and a writeable copy ${x}_{\mathtt{w}}$ of $x$. \;
Start from the root node.\;
Compute ${x}_\mathtt{w} \leftarrow a({x}_\mathtt{w})$, where $a$ is the current node.\label{algstep:loop-step}\;
\If{the current node is a leaf node}{
	output ${x}_\mathtt{w}$ and stop.
}
\Else{
	Compute $m(x_\mathtt{r})$, where $m$ is the right hand side edge.\;
	\If{$m(x_\mathtt{r}) = 1$}{
		Move to the right child node.
	}
	\Else{
		Move to the left child node.
	}
}
Go to step~\ref{algstep:loop-step}. %
\end{algorithm}

\shortver{
\begin{figure}[!t]
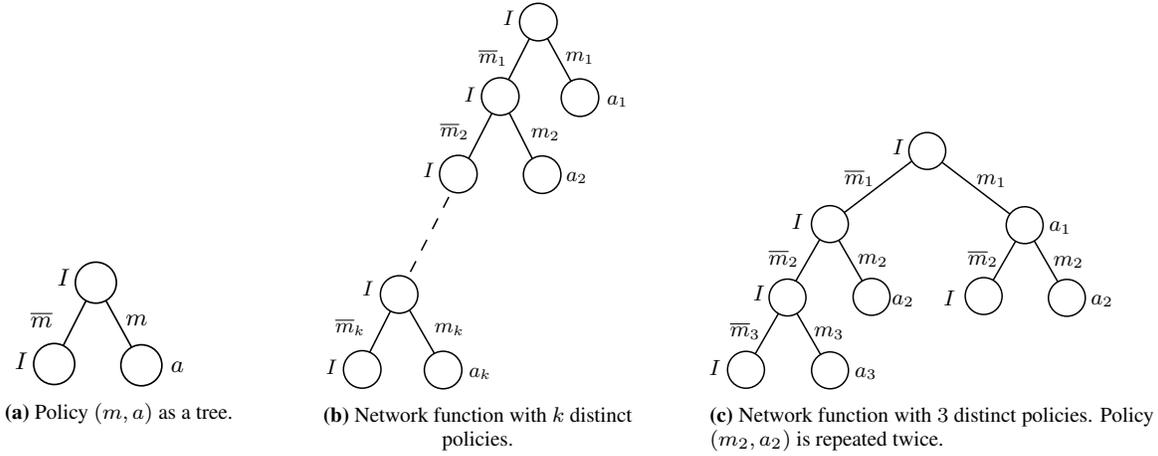

\centering
  \begin{subfigure}[b]{0.38\columnwidth}\centering
 		 \resizebox{1\textwidth}{!}{\input{figures/nf-binary-tree-unbalanced.tex}}
		\caption{Network function with $k$ distinct policies. }
		\label{subfig:nf-binary-tree-unbalanced}
  \end{subfigure}
~
  \begin{subfigure}[b]{0.5535\columnwidth}\centering
 		 \resizebox{1\textwidth}{!}{\input{figures/nf-binary-tree-balanced-k3.tex}}
		\caption{Network function with $3$ distinct policies. Policy $(m_2, a_2)$ is repeated twice.}
		\label{subfig:nf-binary-tree-balanced-k3}
  \end{subfigure}
  \vspace{-0.2cm}
\caption{Network functions as binary trees.}
\label{fig:nf-binary-trees}
\vspace{-0.1cm}
\end{figure}}

\longver{
\begin{figure*}[!t]
\centering
	\begin{subfigure}[b]{0.4\columnwidth}\centering
 		\resizebox{0.8\columnwidth}{!}{\input{figures/policy-binary-tree.tex}}
		\caption{Policy $(m,a)$ as a tree.\\[3ex] }
		\label{subfig:policy-binary-tree}
  \end{subfigure}
  ~~~~~
  \begin{subfigure}[b]{0.6\columnwidth}\centering
 		 \resizebox{0.8\textwidth}{!}{\input{figures/nf-binary-tree-unbalanced.tex}}
		\caption{Network function with $k$ distinct\\ policies. }
		\label{subfig:nf-binary-tree-unbalanced}
  \end{subfigure}
  ~~~~~
    \begin{subfigure}[b]{0.7\columnwidth}\centering
 		 \resizebox{1\textwidth}{!}{\input{figures/nf-binary-tree-balanced-k3.tex}}
		\caption{Network function with $3$ distinct policies. Policy $(m_2, a_2)$ is repeated twice.}
		\label{subfig:nf-binary-tree-balanced-k3}
  \end{subfigure}
  \vspace{-0.2cm}
\caption{Network functions as binary trees.}
\label{fig:nf-binary-trees}
\end{figure*}
}  

\descr{Coverage.} Our abstract definition of network functions captures many network functions used in practice. These include firewalls, NAT and load balancers. Such functions usually perform a matching step to inspect some parts of a packet and modify contents of the packet subsequently. In the case of firewalls, modifications may also include dropping a packet.

\descr{Branching and chaining.} Our definitions support branching, i.e., network functions that do not necessarily apply all policies on a packet. This is achieved by including multiple exit points, i.e., leaf nodes.
\longver{In this sense, our definition of network functions is richer than the one proposed in~\cite{central-pnfv} which does not allow composition of policies with multiple exit points.}
Definitions also support \emph{chaining}, e.g., $\psi_1$'s output is $\psi_2$'s input, however, in our proposed privacy-preserving solution chaining is not possible, since outputs of the MBs in $\CMB(t)$ need to be combined to reconstruct a transformed packet. 
\longver{For chaining to work, network function $\psi_2$ needs to know the output of network function $\psi_1$. However, if $\psi_2$ only needs the original input $x$, instead of the overwritten copy $x_\mathtt{w}$, network function chaining can work by giving $\psi_2$ an auxiliary input, i.e., the share resulting from network function $\psi_1$, on which it can apply its own actions.}

\longver{\subsection{Policies}}
\shortver{\descr{Policies.}}
We restrict $m$ to substring matching and $a$ to be substring substitution. We also introduce the \emph{don't care bit} denoted by~$*$ in our alphabet. 
Given strings $x \in \{0, 1\}^n$ and $y \in \{0, 1, * \}^n$, we say $x = y$ if $x(i) = y(i)$ for all $i \in [n]$ such that $y(i) \ne *$. 
\longver{In other words, if the two strings match at every position except for the don't care positions we consider the two strings to be equal.}
Given $x \in \{0, 1\}^*$, matching function $m$ is defined as \shortver{$m(x)=1$ if $x(1, n) = \mu$ and $0$ otherwise,}
\longver{
\begin{equation}
\label{eq:matching}
m(x) = \begin{cases}
				1, & \text{if } x(1, n) = \mu\\
				0, & \text{otherwise}
				\end{cases},
\end{equation}}
where $\mu \in \{0, 1, *\}^n$. We call $\mu$ the \textit{match} of $m$. 
\longver{To define the action function, we introduce substring replacement.}
Given $x \in \{0, 1\}^n$ and $z \in \{0, 1, *\}^n$, $x \leftarrow z$ represents replacing each $x(i)$ with $z(i)$ if $z(i) \ne *$, and leaving $x(i)$ as is if $z(i) = *$, for all $i \in [n]$. Given $x \in \{0, 1\}^*$, the action function $a$ is defined as %
\longver{
\begin{equation}
\label{eq:action}
a(x) = x(1, n) \leftarrow \alpha, 
\end{equation}}
\shortver{$a(x) = x(1, n) \leftarrow \alpha$,}
where $\alpha \in \{0, 1, *\}^n$. We call $\alpha$ the \textit{action} of $a$. \shortver{For the identity function $I$, $\alpha = *^n$.}%
\longver{With this definition, the identity action function $I$ is $I(x) = x(1, n) \leftarrow \alpha$, where $\alpha = *^n$. }

\descr{Definitions.} Throughout the rest of the paper, we use the following definitions: let $z \in \{0, 1, * \}^n$, the \emph{projection} of $z$, denoted $\pi_z$, is a string $\in \{0, 1\}^n$, s.t. $\pi_z(i) = 1$ if $z(i) \in \{0, 1\}$ and $\pi_z(i) = 0$ if $z(i) = *$. The \emph{masking} of a 
\longver{
$x \in \{0, 1, *\}^n$
}
\shortver{
$x \in \{0, 1\}^n$
} 
using $\pi_z \in \{0, 1\}^n$, denoted $\omega(\pi_z, x)$, returns $x'$ s.t. $x'(i) = x(i)$ if $\pi_z(i) = 1$ and $x'(i) = 0$ if $\pi_z(i) = 0$.
\longver{Although we have broadly defined $\omega(\pi_z, x)$ for an $x \in \{0, 1, *\}^n$, we use it exclusively for an $x \in \{0, 1\}^n$. 
}
$\mathbb{H}: \{0, 1\}^n \rightarrow \{0, 1\}^q$ denotes a cryptographic hash function; $\oplus$ denotes bitwise XOR. The Hamming weight of a string $x \in \{0, 1\}^n$ is $\text{wt}(x)$. Finally,
$x \leftarrow_{\$} \{0, 1\}^n$ means sampling a binary string of length $n$ uniformly at random. 

\section{Introducing SplitBox}\label{sec:solution}

\longver{\subsection{Privacy Requirements}}
\shortver{\descr{Privacy Requirements.}}
We start by describing an \emph{ideal} setting in which a trusted third party, $\mathcal{T}$, computes a network function $\psi$ for the client. Upon receiving a packet $x$, $\EMB$ forwards it to $\mathcal{T}$, which provides the result of $\psi(x)$ to $\TMB$. Here $\EMB$ learns $x$ but not $\psi(x)$ and $\CMB(t)$ neither $x$ nor $\psi(x)$. In this section, we introduce our private NFV solution, SplitBox, aiming to simulate this ideal setting. However, we fall slightly short in that the MBs $\CMB(t)$ learn the projection $\pi_\mu$ and the output $m(x)$ for each $m \in M$, however, they do not learn the match $\mu$ for any $m \in M$ beyond what is learnable from $\pi_\mu$. Although this could reveal information such as which field of the packet the current matching function corresponds to, we do not consider it to be a strong limitation since this might be obvious from the type of NFV considered anyway. For example, if it is a firewall, then it is common knowledge that the fields it operates on will include IP address fields.

\longver{\subsection{The System}}
\descr{Design Aims.}
We consider the following design aims, i.e., the solution should: (a) 
be secure; (b) be computationally fast; (c) 
limit MB-to-MB communication complexity.

\begin{figure}[!tb]
\centering
\resizebox{1\columnwidth}{!}{
\begingroup%
 \makeatletter%
 \providecommand\color[2][]{%
  \errmessage{(Inkscape) Color is used for the text in Inkscape, but the package 'color.sty' is not loaded}%
  \renewcommand\color[2][]{}%
 }%
 \providecommand\transparent[1]{%
  \errmessage{(Inkscape) Transparency is used (non-zero) for the text in Inkscape, but the package 'transparent.sty' is not loaded}%
  \renewcommand\transparent[1]{}%
 }%
 \providecommand\rotatebox[2]{#2}%
 \ifx\svgwidth\undefined%
  \setlength{\unitlength}{200.85478516bp}%
  \ifx\svgscale\undefined%
   \relax%
  \else%
   \setlength{\unitlength}{\unitlength * \real{\svgscale}}%
  \fi%
 \else%
  \setlength{\unitlength}{\svgwidth}%
 \fi%
 \global\let\svgwidth\undefined%
 \global\let\svgscale\undefined%
 \makeatother%
 \begin{picture}(1,0.7)%
  \put(0,0){\includegraphics[scale=0.5,width=\unitlength]{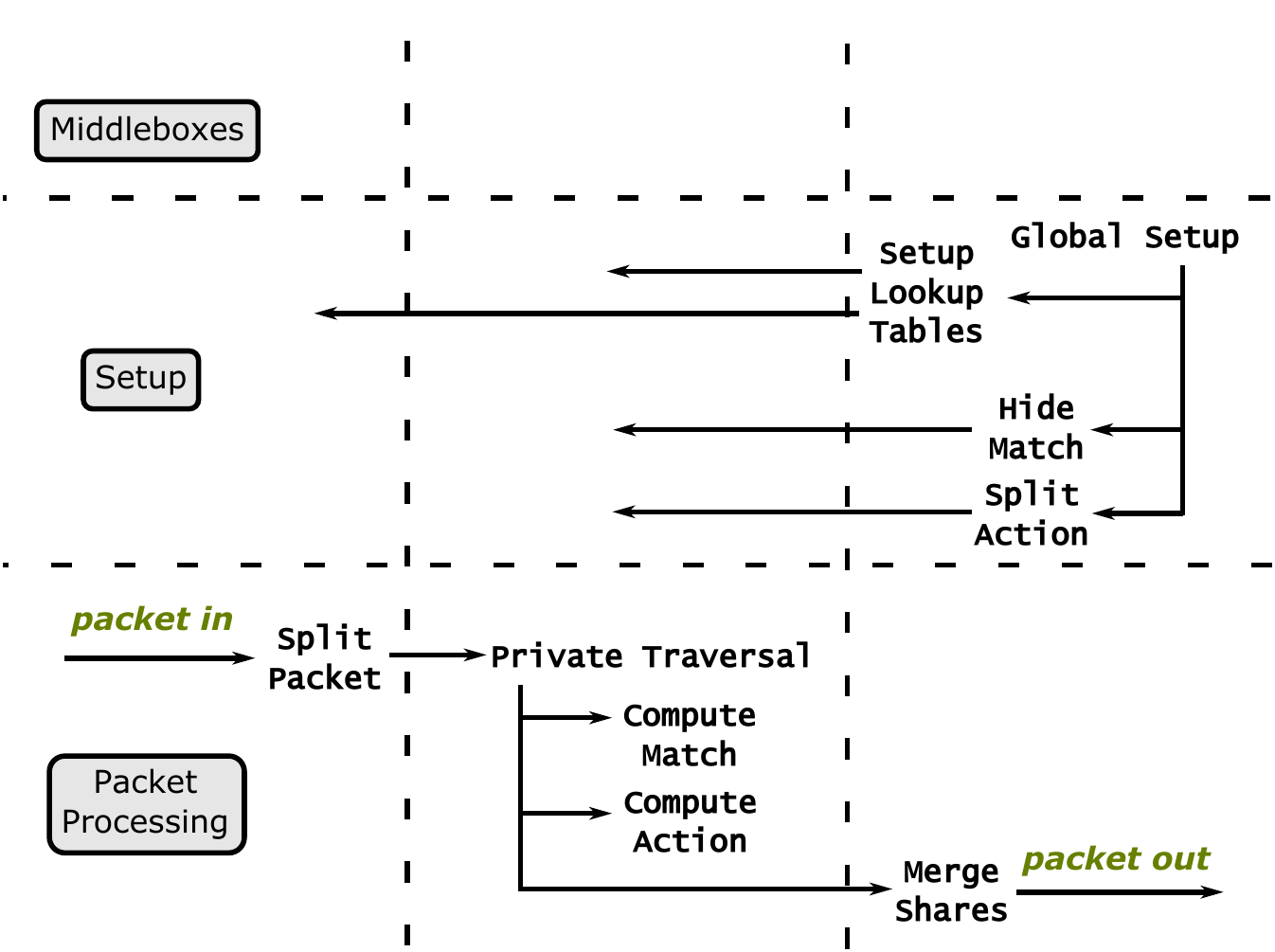}}%
  \put(0.23,0.635){\color[rgb]{0,0,0}\makebox(0,0)[lb]{\smash{\scriptsize $\EMB$}}}%
  \put(0.47,0.635){\color[rgb]{0,0,0}\makebox(0,0)[lb]{\smash{\scriptsize $\CMB(t)$}}}%
  \put(0.82,0.635){\color[rgb]{0,0,0}\makebox(0,0)[lb]{\smash{\scriptsize $\TMB$}}}%
 \end{picture}%
\endgroup%
}
\caption{Breakdown of algorithms executed by each MB in SplitBox.}
\vspace{-0.1cm}
\label{fig:algo-breakdown}
\end{figure} 

\descr{High-Level Overview.}
In a nutshell, if we assume that $\psi$ includes a single policy $(m, a)$, our strategy to hide $m$ 
is to let $\TMB$ \emph{blind} $\mu$ by XORing it with a random binary string $s$ and sending the hash of the result to each MB in $\CMB(t)$; whereas, to hide $a$, $\TMB$ computes $t$ \emph{shares} of the action $\alpha$ using a $t$-out-of-$t$ secret sharing scheme and sends share $j$ to $\CMB_j$. In addition, $\EMB$ encrypts the contents of a packet $x$ by XORing it with the blind $s$, and sends it to the MBs in $\CMB(t)$, which can then compute matches and actions on this encrypted packet. 
We present the details of SplitBox using a set of algorithms, grouped based on the MB executing them. Figure~\ref{fig:algo-breakdown} shows a high-level overview of all the algorithms computed by each MB. We assume $\psi_{\text{priv}}$ to be the private version of the network function $\psi$ whose matching and action functions are replaced by unique identifiers. 

\descr{Middlebox \texorpdfstring{$\TMB$}{C)}.}
The initial setup is performed by $\TMB$ via Algorithm~\ref{alg:glob_setup}.
This includes creating lookup tables (Algorithm~\ref{alg:lookup_setup}), hiding
the matching functions (Algorithm~\ref{alg:hide_match}), and splitting the
action functions (Algorithm~\ref{alg:split_action}). There are two lookup
tables in Algorithm~\ref{alg:lookup_setup}: $S$ for $\mathcal{A}$ and
$\tilde{S}$ for $\CMB(t)$. Table $S$ contains $l$ ``blinds'' which are random
binary strings used to encrypt a packet by XORing. For each blind $s \in S$ and
for each $m \in M$, the portion of the blind corresponding to the projection of
the match $\mu$ is extracted and then XORed with $\mu$. Finally this value is
hashed using $\mathbb{H}$ and stored in the corresponding row of $\tilde{S}$.
The \texttt{Hide Match} algorithm simply sends the projection $\pi_\mu$ of each
match $\mu$ to $\CMB(t)$. This tells $\CMB(t)$ which locations of the incoming
packet are relevant for the current match. The \texttt{Split Action} algorithm
computes $t$ shares of the action $\alpha$ and action projection $\pi_\alpha$,
for each $a \in A$ and sends them to $\CMB(t)$. $\TMB$ uses one more algorithm,
Algorithm~\ref{alg:merge_shares} to reconstruct the transformed packet. This
algorithm XORs the cumulative action shares $\alpha'_j$ and cumulative action
projection shares $\beta'_j$ from $\CMB_j$ to compute the final action
$\alpha'$ and action projection $\beta'$. It also XORs the
encrypted packet received from $\mathcal{A}$ with the current blind $s$ in the lookup table $S$, in order to
reconstruct the final packet. Note that we have modelled dropping a packet as setting $x(1, n)$ to $0^n$. 

\begin{algorithm}[t]
\caption{\texttt{Global Setup} ($\TMB$)}
\label{alg:glob_setup}
\SetAlgoLined
\SetCommentSty{mycommfont}
\SetAlCapSkip{1em}
\DontPrintSemicolon{}
\SetKwInOut{Input}{Input}
\let\oldnl\nl%
\newcommand{\nonl}{\renewcommand{\nl}{\let\nl\oldnl}}%
\Input{Parameters $n$ and $l$, network function $\psi = (M, A)$.}
\For{$j = 1$ \KwTo $t$}{
	Send $\psi_{\text{priv}}$ to $\CMB_j$. \;
}
Run \texttt{Setup Lookup Tables} with parameter $l$, $M$.\;
\For{each $m \in M$}{
	Run \texttt{Hide Match} algorithm.\;
}
\For{each $a \in A$}{
	Run \texttt{Split Action} algorithm.\;
}
\end{algorithm}

\begin{algorithm}[t]
\caption{\texttt{Setup Lookup Tables} ($\TMB$)}
\label{alg:lookup_setup}
\SetAlgoLined
\SetCommentSty{mycommfont}
\SetAlCapSkip{1em}
\DontPrintSemicolon{}
\SetKwInOut{Input}{Input}
\let\oldnl\nl%
\newcommand{\nonl}{\renewcommand{\nl}{\let\nl\oldnl}}%
\Input{Parameter $l$, set $M$.}
Initialize empty table $S$ with $l$ cells.\;
Initialize empty table $\tilde{S}$ with $l\times |M|$ cells.\;
\For{$i = 1$ \KwTo $l$}{
	Sample $s_i \leftarrow_{\$} \{0, 1\}^n$.\;
	Insert $s_{i}$ in cell $i$ of $S$.\;
	\For{$j = 1$ \KwTo $| M |$}{
				Compute $\tilde{s}_{i, j} = \omega(\pi_{\mu_j}, s_i)$, where $\mu_j$ is the match of $m_j$.\;
				Compute $\mathbb{H}(\mu_j \oplus \tilde{s}_{i, j})$.\;
				Insert $\mathbb{H}(\mu_j \oplus \tilde{s}_{i, j})$ in cell $(i, j)$ of $\tilde{S}$.\;
	}
}
Send $S$ to $\EMB$.\;
Send $\tilde{S}$ to $\CMB(t)$. \;
\end{algorithm}

\begin{algorithm}[t] %
\caption{\texttt{Hide Match} ($\TMB$)}
\label{alg:hide_match}
\SetAlgoLined
\SetCommentSty{mycommfont}
\SetAlCapSkip{1em}
\DontPrintSemicolon{}
\SetKwInOut{Input}{Input}
\let\oldnl\nl%
\newcommand{\nonl}{\renewcommand{\nl}{\let\nl\oldnl}}%
\Input{Matching function $m \in M$ with match $\mu$.}
Send $\pi_{\mu}$ to $\CMB(t)$.\;
\end{algorithm}

\begin{algorithm}[t] %
\caption{\texttt{Split Action} ($\TMB$)}
\label{alg:split_action}
\SetAlgoLined
\SetCommentSty{mycommfont}
\SetAlCapSkip{1em}
\DontPrintSemicolon{}
\SetKwInOut{Input}{Input}
\let\oldnl\nl%
\newcommand{\nonl}{\renewcommand{\nl}{\let\nl\oldnl}}%
\Input{Action function $a \in A$ with action $\alpha$.}
Sample $\alpha_1, \alpha_2, \ldots, \alpha_{t-1} \leftarrow_{\$} \{0, 1\}^n$.\;
Let $\tilde{\alpha} = \omega(\pi_{\alpha}, \alpha)$. Compute $\alpha_t = \tilde{\alpha} \oplus \alpha_1 \oplus \cdots \oplus \alpha_{t-1}$.\;
Sample $\beta_1, \beta_2, \ldots, \beta_{t-1} \leftarrow_{\$} \{0, 1\}^n$.\;
Compute $\beta_t = \pi_{\alpha} \oplus \beta_1 \oplus \cdots \oplus \beta_{t-1}$.\;
\For{$j = 1$ \KwTo $t$}{
	Give $\alpha_j, \beta_j$ to $\CMB_j$.\;
}
\end{algorithm}

\begin{algorithm}[t] %
\caption{\texttt{Merge Shares} ($\TMB$)}
\label{alg:merge_shares}
\SetAlgoLined
\SetCommentSty{mycommfont}
\SetAlCapSkip{1em}
\DontPrintSemicolon{}
\SetKwInOut{Input}{Input}
\let\oldnl\nl%
\newcommand{\nonl}{\renewcommand{\nl}{\let\nl\oldnl}}%
\Input{Index $i$, packet copy ${x}_{\mathtt{w}}$, $\alpha'_j$ and $\beta'_j$ from $\CMB_j$ for $j \in [t]$.}
Compute $\alpha' \leftarrow \alpha'_1 \oplus \cdots \oplus \alpha'_t$.\;
Compute $\beta' \leftarrow \beta'_1 \oplus \cdots \oplus \beta'_t$.\;
Compute $x \leftarrow {x}_{\mathtt{w}} \oplus s_i$, where $s_i\in S$.\;
\For{$i = 1$ \KwTo $n$}{
	\If{$\beta'(i) = 1$}{$x(i) \leftarrow \alpha'(i)$}
}
\If{$x(1, n) = 0^{n}$}{Drop $x$.}
\Else{
	Forward $x$.
}
\end{algorithm}

\descr{Middlebox \texorpdfstring{$\EMB$}{A)}.} 
This MB only runs Algorithm~\ref{alg:split_packet}, which maintains a counter initially set to $0$ and incremented every time a new packet $x$ arrives. The value of the counter corresponds to a blind in the lookup table $S$. Therefore its range is $[l]$ (barring the initial value of $0$). The algorithm makes two copies of an incoming packet $x$, $x_\mathtt{r}$ (read-only copy) for matching to be sent to $\CMB(t)$, and ${x}_\mathtt{w}$ (writeable copy) for action functions to be sent to $\TMB$. Both $x_\mathtt{r}$ and $x_\mathtt{w}$ are XORed with the blind in $S$ corresponding to the counter. The current counter value is also given to $\CMB(t)$ and $\TMB$. 

\begin{algorithm}[t] %
\caption{\texttt{Split Packet} ($\EMB$)}
\label{alg:split_packet}
\SetAlgoLined
\SetCommentSty{mycommfont}
\SetAlCapSkip{1em}
\DontPrintSemicolon{}
\SetKwInOut{Input}{Input}
\let\oldnl\nl%
\newcommand{\nonl}{\renewcommand{\nl}{\let\nl\oldnl}}%
\Input{Packet $x$, lookup table $S$.}
Get the index $i \in [l]$ corresponding to the current value of the counter.\;
Let $x_{\mathtt{w}} \leftarrow x \oplus s_i$ (writeable copy), where $s_i\in S$.\;
Compute $x_\mathtt{r} \leftarrow x(1, n) \oplus s_i$ (read-only copy), where $s_i \in S$. \;
\For{$j = 1$ \KwTo $t$}{
	Send $x_\mathtt{r}$, $i$ to $\CMB_j$.\;
}
Send $x_\mathtt{w}$, $i$ to $\TMB$.\;
\end{algorithm}

\descr{Middleboxes \texorpdfstring{$\CMB(t)$}{B(t)}.}
Each MB $\CMB_j$ performs a private version of the \texttt{Traversal} algorithm as shown in Algorithm~\ref{alg:priv_traversal}. $\CMB_j$ first initializes cumulative action strings $\alpha'_j$ and cumulative action projection strings $\beta'_j$ as strings of all zeros. Within the \texttt{Private Traversal} algorithm, $\CMB_j$ executes the action functions using Algorithm~\ref{alg:compute_action} and matching functions using Algorithm~\ref{alg:compute_match}. The \texttt{Compute Action} algorithm essentially updates $\alpha'_j$ and $\beta'_j$ by XORing with the action share and action projection share of the current action. The \texttt{Compute Match} algorithm uses the read-only copy $x_\mathtt{r}$. It extracts the bits of $x_\mathtt{r}$ corresponding to the current match projection $\pi_\mu$. It then looks up the counter value $i$ (sent by $\EMB$) and the index of the matching function in the lookup table $\tilde{S}$ and extracts the hashed match. This is then compared with the hash of the relevant bits of $x_\mathtt{r}$. 

\begin{algorithm}[h] %
\caption{\texttt{Private Traversal} ($\CMB(t)$)}
\label{alg:priv_traversal}
\SetAlgoLined
\SetCommentSty{mycommfont}
\SetAlCapSkip{1em}
\DontPrintSemicolon{}
\SetKwInOut{Input}{Input}
\let\oldnl\nl%
\newcommand{\nonl}{\renewcommand{\nl}{\let\nl\oldnl}}%
\Input{Index $i$, read-only copy $x_\mathtt{r}$, network function $\psi_{\text{priv}}$.}
Initialize empty strings $\alpha'_j \leftarrow 0^n$ and $\beta'_j \leftarrow 0^n$.\;
Start from the root node. \;
Update $\alpha'_j$ and $\beta'_j$ by running the \texttt{Compute Action} algorithm on the current node $a$.\label{algstep:loop-step-2}\;
\If{the current node is a leaf node}{
	Send $i$, $\alpha'_j$ and $\beta'_j$ to party $\TMB$ and stop.
}
\Else{
	Run \texttt{Compute Match} algorithm on $i$, $m$ and $x_\mathtt{r}$, where $m$ is the right hand side edge.\;
	\If{\emph{\texttt{Compute Match}} outputs 1}{
		Go to the right child node.
	}
	\Else{
		Go to the left child node.
	}
}
Go to step~\ref{algstep:loop-step-2}. \;
\end{algorithm}

\begin{algorithm}[h] %
\caption{\texttt{Compute Action} ($\CMB(t)$)}
\label{alg:compute_action}
\SetAlgoLined
\SetCommentSty{mycommfont}
\SetAlCapSkip{1em}
\DontPrintSemicolon{}
\SetKwInOut{Input}{Input}
\let\oldnl\nl%
\newcommand{\nonl}{\renewcommand{\nl}{\let\nl\oldnl}}%
\Input{Pair of cumulative action and cumulative action projection shares $(\alpha'_j, \beta'_j)$ of $\CMB_j$, pair of action and action projection shares $(\alpha_j, \beta_j)$ of action function $a \in A$ of $\CMB_j$.}
Compute $\alpha'_j \leftarrow \alpha'_j \oplus \alpha_j$.\;
Compute $\beta'_j \leftarrow \beta'_j \oplus \beta_j$.\;
Output $\alpha'_j$, $\beta'_j$.\;
\end{algorithm}

\begin{algorithm}[h] %
\caption{\texttt{Compute Match} ($\CMB(t)$)}
\label{alg:compute_match}
\SetAlgoLined
\SetCommentSty{mycommfont}
\SetAlCapSkip{1em}
\DontPrintSemicolon{}
\SetKwInOut{Input}{Input}
\let\oldnl\nl%
\newcommand{\nonl}{\renewcommand{\nl}{\let\nl\oldnl}}%
\Input{Read-only copy $x_\mathtt{r}$, index $i \in [l]$, lookup table $\tilde{S}$, index $j \in [|M|]$ of $m_j \in M$ with match $\mu_j$.}
Lookup table $\tilde{S}$ at index $(i, j)$ to obtain $\mathbb{H}(\tilde{s}_{i, j})$.\;
Extract $\tilde{x}_\mathtt{r} \leftarrow \omega(\pi_{\mu_j}, x_\mathtt{r})$.\;
Compute $\mathbb{H}(\tilde{x}_\mathtt{r})$.\;
\If(\tcp*[f]{$m(x) = 1$}){$\mathbb{H}(\tilde{x}_\mathtt{r}) = \mathbb{H}(\mu_j \oplus \tilde{s}_{i, j})$}{Output 1.}
\Else(\tcp*[f]{$m(x) = 0$}){Output 0.}
\end{algorithm}

\section{Analysis}
\longver{\subsection{Correctness}}
\shortver{\descr{Correctness.}}
Given $\psi = (M, A)$, for a matching function $m \in M$, as long as $m$ can be represented as substring matching, SplitBox correctly computes the match. That is, if $m$ is an equality test or range test for powers of 2 in binary (e.g., IP addresses in the range $\mathtt{127.*.*.32}$ to $\mathtt{127.*.*.64}$), then it can be successfully computed by SplitBox. Our model also allows for arbitrary ranges by dividing $m$ into smaller matches that check equality matching of individual bits. However, such a representation can potentially make $\psi$ very large. 
We can correctly compute action functions as long as they satisfy two properties: (a) they are applied to the initial packet $x$ only, and not on its transformed versions; (b) any two action projections $\beta_i$ and $\beta_j$ do not overlap on their non-zero bits. Note that this does not restrict the number of times the identity function $I$ can be applied, as its action projection is $0^n$. 

\longver{\subsection{Security}\label{sub:security}}%
\shortver{\descr{Security.}}%
\longver{The proof of security of our construction is shown in Appendix~\ref{app:sec-proofs}. Here, we}%
\shortver{While we refer to the full version of the paper for the security proofs~\cite{full}, here we}
mention two important points: if SplitBox is used for match projections whose Hamming weight is low, then the $\CMB(t)$ can brute-force $\mathbb{H}$ to find its pre-image. This reveals $\mu \oplus s$ for some blind $s$, which allows the adversary to learn more than simply looking at the output of $m$. Namely, if $m(x) = 0$, the adversary learns which relevant bits of an incoming packet $x$ do not match with the stored match. 
\longver{This is the reason why we use the hash function $\mathbb{H}$. It does not allow $\CMB(t)$ to learn more than the output of $m$. }
The second point relates to the length of the look-up table $l$: ideally $l$ should be large enough so that the same blind is not re-used before a long period of time. However, high throughput would require a prohibitively large value of $l$. Therefore, we propose the following mitigation strategy: with probability $0 < 1 - \rho < 1$, $\EMB$, sends a uniform random string from $\{0, 1\}^n$ (dummy packet), rather than the next packet in the queue. 
\longver{Thus, any middlebox in $\CMB(t)$, that attempts to compare two packets using the same blind (according to the value of the counter $i \in [l]$) does not know for certain whether the result corresponds to two actual packets (the probability is $ \rho^2$) or not. The downside is that this reduces the (effective) throughput by a factor of $\rho$. Nevertheless, with this strategy we can use a feasible value of $l$. Of course $\EMB$ has to indicate to $\TMB$ which packet is a dummy packet. This can be done by sending a bit through $\CMB(t)$ to $\TMB$ by once again using a $t$-out-of-$t$ secret sharing scheme (XORing with random bits).}
\shortver{Thus, any MB in $\CMB(t)$ does not know if the two packets corresponding to the same blind are two actual packets (the probability is $ \rho^2$). The downside is that this reduces the (effective) throughput by a factor of $\rho$. $\EMB$ can indicate to $\TMB$ if the current packet is a dummy packet by sending a bit through $\CMB(t)$ to $\TMB$ using a $t$-out-of-$t$ secret sharing scheme (XORing with random bits).}

\section{Implementation}
In this section, we discuss our proof-of-concept implementation of
SplitBox inside
FastClick~\cite{barbette2015fast}, an extension of the Click modular
router~\cite{kohler2000click} which provides fast user-space packet I/O and easy
configuration via automatic handling of multi-threading and multiple
hardware queues. We also use Intel DPDK~\cite{intelDPDK} as the underlying
packet I/O framework. We implemented three main FastClick elements: element 
\texttt{Entry} corresponding to MB $\EMB$, \texttt{Processor} corresponding 
to MBs $\CMB$, and \texttt{Client} to $\TMB$. \texttt{Client}
implements the \texttt{Merge Shares} algorithm. 
\longver{An element \texttt{Entry}
corresponds to MB $\EMB$. This element is responsible for the
\texttt{Split Packet} algorithm. An element \texttt{Processor} is used for the
$\CMB_j$ MBs. It is responsible for the \texttt{Private Traversal},
\texttt{Compute Match} and \texttt{Compute Action} algorithms. Finally, an
\texttt{Client} element corresponds to $\TMB$ and is responsible for the
\texttt{Merge Shares} algorithm.}
The other algorithms of $\TMB$ are executed outside the FastClick
elements, and used to configure the above three elements.
The hash function $\mathbb{H}$ is implemented using OpenSSL's SHA-1, aiming to
achieve a compromise between security, digest length, and computation
\shortver{speed, as}
\longver{speed. While
faster hashing functions are available, they are not cryptographic hash functions, 
thus they might be invertible and/or lead to
larger amount of collisions. On the other hand, we do not want hash functions which have very large message digests (leading to overly large lookup
tables), or which are more computationally expensive (as discussed in Section
\ref{sec:evaluation}, the hashing speed is an important factor of the performance of our
solution).
}\shortver{hash functions which have larger message digests will lead to overly large lookup tables.}
\texttt{Client} uses a circular buffer to collect packet shares until
all have been received and the final packet can be reconstructed.
For communication between our elements, we use UDP packets: UDP and L2
processing relies on standard Click elements
such as \texttt{UDPIPEncap}. Finally, we also add a few elements to help in
our delay measurements, as explained below.

To evaluate our implementation, we focus on a firewall use case,
using a network function tree similar to that in
Figure~\ref{subfig:nf-binary-tree-unbalanced}.
A single action is applied, either the identity action, if the packet is allowed,
or marking the packet with a drop message ($0^n$), if it should be dropped.
We use three commodity PCs for our experiments (8-core Intel Xeon E5-2630 with
2.4GHz CPU and 16 GB of RAM): one for both \texttt{Entry} and
\texttt{Client}, in order to use the same clock for delay
measurements, and the other two as two \texttt{Processor}s. The four
nodes (including the two on the same machine) are connected through Intel X520
NICs, with 10-Gbps SFP+ cables. The topology is thus very similar to the one in
Figure~\ref{fig:pnfv-overview}, except that we only have $t = 2$ in $\mathcal{B}(t)$,
and that $\mathcal{A}$ and $\mathcal{C}$ share the same physical machine.
Another difference is that our machines are connected directly, without
intermediate routers between them.
We use a trace captured at one of our campus border router (pre-loaded into
memory) as input for the
\texttt{Entry} element, which executes the \texttt{Split Packet} algorithm
on a single core. Then, each output of \texttt{Entry}
(one for $\mathcal{C}$ and one per $\mathcal{B}_j$) is encapsulated
inside an UDP packet and sent to the corresponding
output device, using one core per device.

On each $\mathcal{B}_j$ machine, the packets are read from the input device,
decapsulated, and then passed to a \texttt{Processor} element which does
the actual filtering. The resulting action packets are then
re-encapsulated and sent through the NIC towards the client. This 
operation is done on a single core, but several cores can easily be used in
parallel.%
With FastClick, it suffices to launch
Click with more cores, and the system will automatically create the
corresponding number of hardware queues on the NICS, and assign a core to each
queue.
On the client side, each of the three input NICs has an associated core.
Incoming packets are decapsulated, and then passed to the
\texttt{Client} element, which reconstructs the final packets (on its
own core). Reconstructed packets which are not marked as dropped are then
passed to a receiver pipeline, which computes the entry-to-exit delay, 
counts packets and measures reception bandwidth.
To measure delays, the packets in the in-memory list are tagged with a sequence
number in the packet payload, before the transmission begins. This number
allows to match the exit timestamp with an entry timestamp, which is kept in
memory. This allows to avoid storing the timestamp itself in the packet,
which would increase the delay measured.\longver{ To store the sequence number,
 we need to extend very small packets (e.g.\ TCP ACKs). We prefer that to
 not accounting for small packets in our delay measurements.}

\longver{The SplitBox setup is compared against a simpler setup using a
 \texttt{IPFilter} element, with the same filtering rules, to act as a
 non-private firewall. In that configuration, a single machine is used. The
 \texttt{IPFilter} element replaces the \texttt{Entry} element and sends only
 the non-dropped packets (without encapsulation) directly to an output device,
 which is connected to an input device feeding the receiver pipeline.}
\shortver{The SplitBox setup is compared against a simpler setup using an
 \texttt{IPFilter} element on a single machine, to act as a non-private
 outsourced firewall with the same rules. The \texttt{IPFilter} element sends only the
 non-dropped packets (without encapsulation) directly to an output device,
 which is connected to an input device feeding the receiver pipeline.}

\section{Performance Evaluation}
\label{sec:evaluation}

We now present the results of the experiment described above,
with various input bit-rates and different number of rules,
while measuring loss rate and delays.
While we have to forward all packets to the client, a non-private outsourced
firewall can drop the rejected packets immediately.
Thus, its achievable bit-rate will depend on a combination of the input
traffic and the ruleset. To normalize results in our analysis, we craft
rulesets such that all packets are accepted. While it changes nothing for SplitBox,
it is a worst-case for the \texttt{IPFilter}-based testcase. 
At the same time, we tightly
control the number of match attempts per packet, in order to evaluate the impact
of the average number of rules traversed by a packet before it matches.

Figure~\ref{fig:bandwidth} illustrates the evolution of the maximum achievable
bandwidth (taken as inducing less than 0.001\% losses), as a function
of the number of traversed rules (i.e., the number of match attempts per
packet). Our trace packets are about 1~kB on average, so that 8~Gbps
corresponds to about 1~Mpps.
We observe that the bandwidth
decreases significantly with more traversed rules with SplitBox (PNFV),
mainly due to the hashing function, which is called on the packet header once per match attempt. 
Not only is this more computationally expensive than simpler comparisons, but it is also done
each time on different data (as we need to first XOR packet header with match
projection), taking no advantage of the cache. \longver{\texttt{IPFilter} is also
sensitive to the number of match attempts, but much less so thanks to cheaper
comparisons on a hot cache.} Fortunately, the \texttt{Processor} operation is
inherently parallelizable, thus, allocating more cores speeds things up. 
Note that the average number of traversed rules
in a real firewall is significantly lower than the total number of rules. 
Therefore, 
it is particularly important to choose the order of match attempts according to the traffic
distribution, and/or to use a more complex tree structure minimizing the number
of match attempts.

\begin{figure}[t]
 \centering
 \includegraphics[width=0.95\columnwidth]{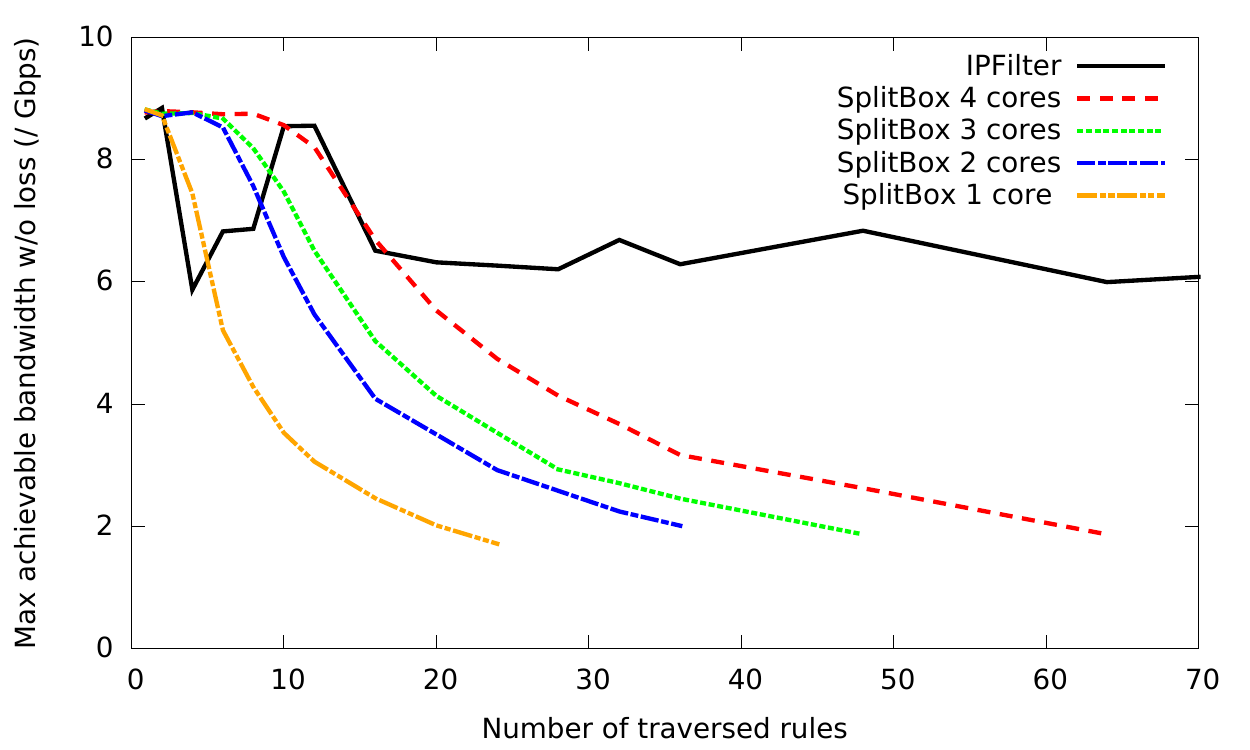}
 \ifshort\else\vspace{-0.3cm}\fi
 \caption{Achievable bandwidth drops sharply with the number of traversed
 rules.}
 \ifshort\else\vspace{-0.3cm}\fi
 \label{fig:bandwidth}
\end{figure}

\begin{figure}[t]
 \centering
 \includegraphics[width=0.95\columnwidth]{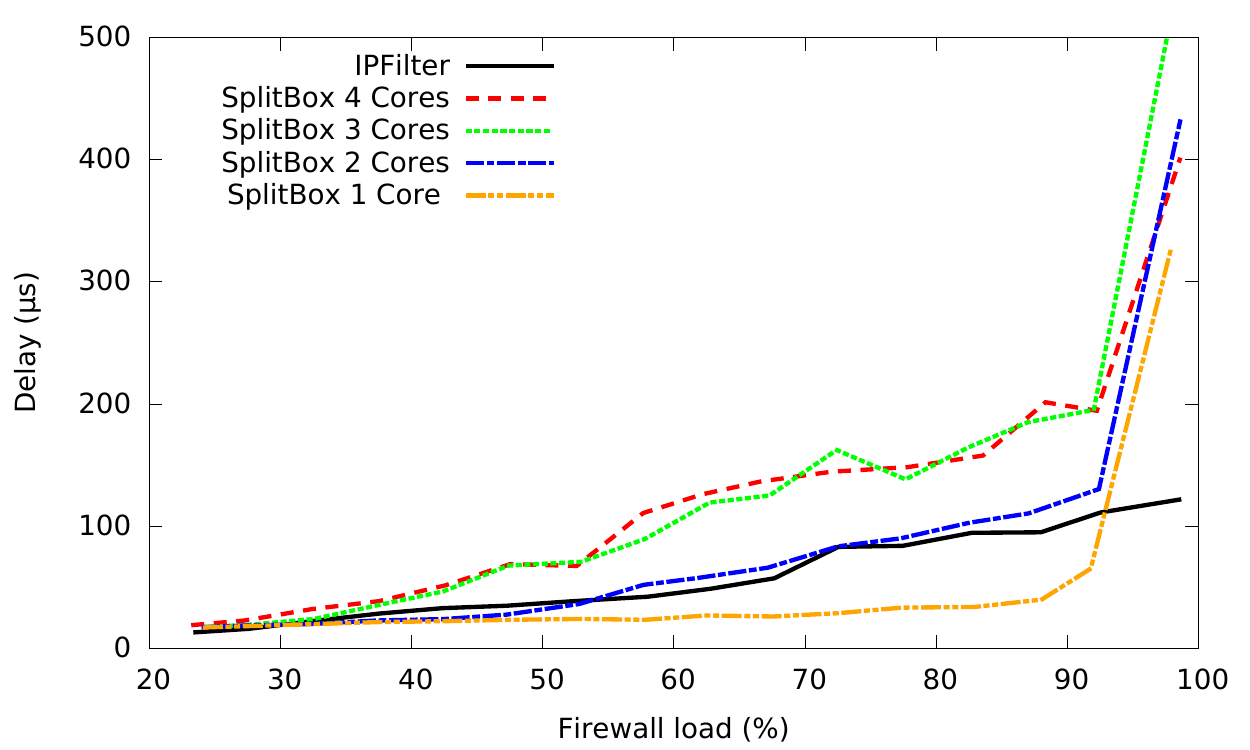}
 \ifshort\else\vspace{-0.3cm}\fi
 \caption{Delay increases with the firewall load.}
 \ifshort\else\vspace{-0.3cm}\fi
 \label{fig:delays}
\end{figure}

Finally, in Figure~\ref{fig:delays}, we plot the delays as a function of firewall load
(i.e., current input bandwidth over maximum achievable bandwidth).
Note that the delays do not follow the same dependency w.r.t.~the
number of match attempts per packet. Although these increase slightly with
the number of traversed rules, they are mostly governed by queuing delays in
the system (in NICs rings, or in-memory rings exchanging packets between
the different processing cores). 
The number of blinds $l$ seems to have little impact on the performance: with
$l$ ranging from 64 to 65,536, we observe no noticeable difference,
except for additional memory consumption.

In conclusion, our SplitBox proof-of-concept implementation 
for a firewall use case achieves comparable performance 
to a non-private version, providing
acceptable throughput and delays for small rulesets. Larger rulesets should be
carefully laid out in order to minimize the number of match attempts per
packet.

\section{Conclusion \& Future Work}
This paper presented SplitBox, a novel scalable system that allows a cloud service provider to privately compute
network functions on behalf of a client, in such a way that the cloud does not learn the network 
policies. It provides strong security guarantees in the honest-but-curious model, based on cryptographic secret sharing.
We experiment with our implementation using firewall as a test case, and achieve a throughput 
in the order of 2~Gbps, with packets of average size 1~kB traversing about 60 firewall rules. 
In future work, we plan to consider more diverse types of 
matches (that allow matching on arbitrary ranges) and actions (that allow overlapping non-zero bits), 
as well as using $k$-out-of-$t$
secret sharing schemes rather than $t$-out-of-$t$.

\bibliographystyle{abbrv}
\bibliography{bibfile}

\ifshort
\else

  \appendix
  \section{Security Proofs}
\label{app:sec-proofs}
We assume a passive (honest-but-curious) adversary $\mathcal{E}$ which can either corrupt $\mathcal{A}$, or up to $t - 1$ parties (MBs) from $\mathcal{B}(t)$.\footnote{\small We will use the word `party' instead of middlebox or MB in this section.} Let $\Pi$ denote our PNFV scheme (SplitBox). Before a formal security analysis, we first discuss the assumptions and privacy requirements of the scheme $\Pi$.
\begin{itemize}
	\item The parameter $n$ is public. 
	\item $\mathcal{A}$ should not know the network function $\psi = (M, A)$ (not even $|M|$ or $|A|$). It does however see $x$ in clear. 
	\item Each $\mathcal{B}_j \in \mathcal{B}(t)$ knows the projection $\pi_\mu$ of the match $\mu$ of each matching function $m \in M$. It should not, however, learn the match $\mu$ of any matching function $m \in M$ (beyond what is learnable through $\pi_\mu$). It also knows the result of all the of matching functions. Note that this may include matching functions that are not necessary to compute $\psi(x)$ for each packet $x$, i.e., the subset of matching functions that are in the path that exit the graph $\psi$ given $x$. Since $\mathcal{B}_j$ can always access the hash function $\mathbb{H}$ offline, it can check all matching functions $m \in M$ for their output (not necessarily in the path of $\psi$). We therefore need to make this explicit. 
	\item Each party $\mathcal{B}_j \in \mathcal{B}(t)$ should not know $x$. Furthermore, for any two packets $x_1$ and $x_2$, it should not know which bits of $x_1$ and $x_2$ are the same, beyond what is learn-able through the result of the subset of the matching functions used in $\psi(x_1)$ and $\psi(x_2)$. In particular, if a matching function $m$ has projection $\pi_{\mu}$ for its match $\mu$, it should only learn that the bits corresponding to $\pi_{\mu}$ are the same if $m(x_1) = m(x_2) = 1$. If $m(x_1) \ne m(x_2)$, $\mathcal{B}_j$ should not learn whether individual bits corresponding to $\pi_{\mu}$ are the same or different (except when $\text{wt}(\pi_\mu) = 1$). That is the reason for using the hash function $\mathbb{H}$ in the scheme. 
	\item Any coalition of $t - 1$ parties in $\mathcal{B}(t)$ should not be able to learn the action $\alpha$ and the action projection $\beta$ of every action $a \in A$. 
\end{itemize} 

Let us denote random variables $I$ and $O$ denoting the input and output of a party (or a subset of parties) corrupted by $\mathcal{E}$.\footnote{\small Somewhat abusing notation, in this section, we use the same symbol $I$ for input that was previously reserved for the identity action function.} Further denote the random variable $X$ representing the packet $x$, and $D$ representing the description of the network function $\psi$. The output of the network function $\psi$ on input from $X$ is denoted $\psi(X)$. We first describe the ideal functionality, denoted \textsc{ideal}, followed by the real setting, denoted \textsc{real}.

$\textsc{ideal}(\psi, \mathcal{S})$. We assume a trusted third party $\mathcal{T}$, which communicates with each of the parties via a secure and private link. $\mathcal{T}$ is given the network function $\psi = (M, A)$. Parties $\mathcal{B}(t)$ are given the ``index set'' of $M$ (i.e., $\{1, 2, \ldots, |M|\}$) together with the matching projections $\pi_\mu$, for the match $\mu$ of each matching function $m \in M$. Notice that, since in our protocol, we leak this information, we need to make this explicit. Party $\mathcal{A}$ receives a packet $x$ and hands it over to $\mathcal{T}$. $\mathcal{T}$ computes $x' = \psi(x)$. It hands over $x'$ to $\mathcal{C}$. Since in our protocol, we leak the information about the output of the matching functions, $\mathcal{T}$ also hands over the result of each matching function $m \in M$ to the parties $\mathcal{B}(t)$. The simulator $\mathcal{S}$ serves as the adversary in the \textsc{ideal} setting. Succinctly, $\textsc{ideal}(\psi, \mathcal{S})$ is the tuple $(I, O, X, \psi(X), D)$, where the random variables correspond to the party (or subset of parties) controlled by $\mathcal{S}$. 

$\textsc{real}(\Pi, \mathcal{E})$. Our real setting is simply the execution of our scheme in the presence of the adversary $\mathcal{E}$. It again represents the tuple $(I, O, X, \psi(X), D)$ where each random variable corresponds to the party (or subset of parties) corrupted by $\mathcal{E}$. Naturally, depending on whether $\mathcal{E}$ corrupts party $\mathcal{A}$ or upto $t-1$ parties in $\mathcal{B}(t)$, the simulator $\mathcal{S}$ in the ideal setting will be different (and so will be the random variables in the tuple $(I, O, X, \psi(X), D)$). 

With these two settings, we want to show that for every probabilistic polynomial time adversary $\mathcal{E}$ there exists a probabilistic polynomial time adversary $\mathcal{S}$, such that 
\[
\textsc{real}(\Pi, \mathcal{E})	 \approx_{\text{c}} \textsc{ideal}(\psi, \mathcal{S}),
\]
where $\approx_c$ denotes computational indistinguishability. If the above holds, we say that $\Pi$ privately processes $\psi$. In our proofs, we implicitly use the assumption that given binary strings $c$ and $c_1, \ldots, c_t$ such that $c_1, \ldots, c_{t-1}$ are random binary strings in $\{0, 1\}^n$, and $c_t = c_1 \oplus \cdots \oplus c_{t - 1} \oplus c$, then any subset of strings from $c_1, \ldots, c_t$, denoted $C(t-1)$, with cardinality $\le t - 1$, the following holds: $\mathbb{P}[c | C(t-1)] = \mathbb{P}[c] = 2^{-n}$. The proof of this assumption is standard. We use this result whenever we talk about $t$-out-of-$t$ shares in our proposed PNFV solution.

Our main results are as follows.
\begin{theorem}
The PNFV scheme $\Pi$ privately processes $\psi$ against an honest-but-curious $\mathcal{E} = \mathcal{A}$.
\end{theorem}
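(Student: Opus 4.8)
The plan is to give a simulation-based argument: for the adversary $\mathcal{E} = \mathcal{A}$ I will exhibit a PPT simulator $\mathcal{S}$ that reproduces $\mathcal{A}$'s view in the ideal world, and then show that the resulting tuple $(I, O, X, \psi(X), D)$ is distributed identically to the one in $\textsc{real}(\Pi, \mathcal{E})$; statistical (indeed perfect) equality immediately yields the required $\approx_c$, so no computational assumption on $\mathbb{H}$ is needed for this case. First I would pin down exactly what a corrupt $\mathcal{A}$ sees. By inspection of the protocol, $\mathcal{A}$ runs only \texttt{Split Packet}: its input $I$ is the packet $x$ together with the lookup table $S = (s_1, \ldots, s_l)$ it receives from $\mathcal{C}$, and its output $O$ consists, for each arriving packet, of the counter index $i$, the read-only share $x_\mathtt{r} = x(1, n) \oplus s_i$, and the writeable share $x_\mathtt{w} = x \oplus s_i$. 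Crucially, $\mathcal{A}$ never receives $\tilde{S}$, the match projections $\pi_\mu$, nor any action shares; all of those are routed only to $\mathcal{C}$ or to $\mathcal{B}(t)$, so no component of $\mathcal{A}$'s view depends on $(M, A)$.

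Next I would build the simulator. In the ideal world $\mathcal{S}$ (playing $\mathcal{A}$) is handed the packet $x$, exactly as $\mathcal{A}$ is in the real world. Since $S$ is, by \texttt{Setup Lookup Tables}, simply a table of $l$ strings drawn i.i.d.\ uniformly from $\{0, 1\}^n$ and is generated independently of $\psi$, $\mathcal{S}$ can sample its own $S \leftarrow_{\$} (\{0, 1\}^n)^l$, maintain the same counter discipline to obtain $i$, and compute $x_\mathtt{r}$ and $x_\mathtt{w}$ by the very formulas of \texttt{Split Packet}. It then outputs $I = (x, S)$ and $O = (i, x_\mathtt{r}, x_\mathtt{w})$.

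Finally I would establish indistinguishability. Conditioned on $X = x$ and $D = \psi$, the only randomness in $\mathcal{A}$'s real view is the choice of $S$ and the counter value $i$; both are sampled in the real protocol exactly as $\mathcal{S}$ samples them, and the shares are the same deterministic functions of $(x, S, i)$. Hence the conditional distribution of $(I, O)$ is identical in the two experiments, and since $X$, $\psi(X)$ and $D$ are supplied identically by the environment in both, the full tuples coincide in distribution. I expect the only point needing real care, the \emph{obstacle} such as it is, to be the bookkeeping verification that $\mathcal{A}$'s view is genuinely $\psi$-independent: one must confirm that nothing carrying information about the matches $\mu$ or actions $\alpha$, in particular the hashed table $\tilde{S}$ produced in \texttt{Setup Lookup Tables}, is ever delivered to $\mathcal{A}$, and that the counter's range $[l]$ and its update rule leak nothing about $\psi$ either. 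Once that is checked, perfect simulation follows.
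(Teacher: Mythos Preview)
Your proposal is correct and follows essentially the same simulation-based approach as the paper: have $\mathcal{S}$ sample the lookup table $S$ itself, run the counter, and locally produce $(i, x_\mathtt{r}, x_\mathtt{w})$ from $x$, exploiting that nothing $\mathcal{A}$ sees depends on $\psi$. The only cosmetic difference is that the paper's simulator additionally draws a fresh random $r$ (with $t$ shares) and sets $x_\mathtt{w} \leftarrow x \oplus r$ rather than $x \oplus s_i$; since both are uniform and independent of $\psi$, this does not change the argument, and your version is in fact more faithful to the \texttt{Split Packet} algorithm as written.
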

\begin{proof}
Before receiving any packet, the simulator $\mathcal{S}$ samples $l$ uniformly random strings $s_i \in \{0, 1\}^n$ to construct the lookup table $S$ and gives it to $\mathcal{E}$. It initializes its counter to $0$. Upon receiving a packet $x$, $\mathcal{S}$ forwards it to $\mathcal{T}$. For $\mathcal{E}$, $\mathcal{S}$ first gets the current value of the counter $i \in [l]$. It further samples a uniformly random $r \in \{0, 1\}^n$ and constructs $x_{\mathtt{w}} \leftarrow x \oplus r$. It computes $t$ shares of $r$, the $j$th share of which is denoted $r_j$. Finally it obtains $x_\mathtt{r} \leftarrow x(1, n) \oplus s_i$ by looking up the counter value $i$ in the table $S$. Finally $\mathcal{S}$ gives $x_{\mathtt{r}}$, $i$, $x_{\mathtt{w}}$ and the $t$ shares of $r$ to $\mathcal{E}$. Once the counter $i$ reaches $l$, $\mathcal{S}$ resets it to $0$. 

Since the input to party $\mathcal{A}$ is the same as the input packet $x$, we have that $I = X$ (which holds both in the ideal and real setting). The output $O$ is distributed in the exact same manner in the two worlds. Since the output is generated without any knowledge of the network function $\psi$, we have that $D$ is the same in the ideal and real world. Finally, the output of $\psi$ is not revealed in the two worlds. Hence 
$\textsc{real}(\Pi, \mathcal{E}) =\vspace{-0.1cm}$ $\textsc{ideal}(\psi, \mathcal{S}) \Rightarrow \textsc{real}(\Pi, \mathcal{E})	 \approx_{\text{c}} \textsc{ideal}(\psi, \mathcal{S})$. 
\end{proof}

As discussed in Section~\ref{sub:security}, if the match of a matching function is small, the adversary can brute-force the hash function $\mathbb{H}$ to find its pre-image. Thus, our security proof for $\mathcal{E} \subset \mathcal{B}(t)$ requires that the minimum Hamming weight of a match $\mu$ in the set of matching functions $M$ should be large enough for brute-force to be infeasible. Furthermore, our security proof applies only when the blinds are used once, i.e., for counter values $\le l$ without reset. See Section~\ref{sub:security} for our proposed mitigation strategy for security, when the counter completes its cycle.
\begin{theorem}
Suppose $\delta = \min_\mu \text{wt}(\pi_\mu)$, for all matching functions $m \in M$. The PNFV scheme $\Pi$ privately processes $\psi$, up to $l$ inputs (packets), against an honest-but-curious $\mathcal{E} \subset \mathcal{B}(t)$ in the random oracle model.
\end{theorem}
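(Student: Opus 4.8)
The plan is to exhibit, for every honest-but-curious $\mathcal{E}$ corrupting a set $B \subset \mathcal{B}(t)$ with $|B| \le t-1$, a simulator $\mathcal{S}$ whose output is computationally indistinguishable from $\textsc{real}(\Pi,\mathcal{E})$, built only from the data the ideal functionality leaks to $B$: the index set $\{1,\ldots,|M|\}$, the projections $\pi_{\mu_j}$, the structure $\psi_{\text{priv}}$ (which carries only abstract identifiers and the tree shape), and, for each packet, the bits $m_j(x)$ for all $j$. The crucial structural observation is that the corrupted parties never receive the blind table $S$ (only $\tilde S$), so each read-only copy $x_\mathtt{r}=x(1,n)\oplus s_i$ is a one-time-pad encryption under the fresh secret blind $s_i$; as long as the counter stays $\le l$ without reset, every $s_i$ is used once, so from $B$'s viewpoint $x_\mathtt{r}$ is uniform and independent of $x(1,n)$. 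This is exactly why the guarantee is stated ``up to $l$ inputs.''

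\emph{Setup simulation.} First I would have $\mathcal{S}$ populate the simulated $\tilde S$ with independent uniform strings in $\{0,1\}^q$, and hand $B$ uniformly random shares $(\alpha_j,\beta_j)$ for $j\in B$ together with $\psi_{\text{priv}}$ and the $\pi_{\mu_j}$'s. Indistinguishability of the shares is immediate from the $t$-out-of-$t$ secret-sharing lemma stated above (any $\le t-1$ shares are distributed as uniform independent strings), so this step is perfect. Replacing the real entries $\mathbb{H}(\mu_j\oplus\tilde s_{i,j})$ by random strings is sound in the random oracle model precisely because $B$ does not know $s_i$, hence does not know the preimage $\mu_j\oplus\tilde s_{i,j}$.

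\emph{Per-packet simulation and oracle programming.} On the $i$-th packet, $\mathcal{S}$ learns $i$ and the vector $(m_j(x))_j$ from $\mathcal{T}$, samples $x_\mathtt{r}\leftarrow_{\$}\{0,1\}^n$, and for every $j$ with $m_j(x)=1$ programs the oracle by setting $\mathbb{H}(\omega(\pi_{\mu_j},x_\mathtt{r})):=\tilde S[i,j]$, leaving failing matches unprogrammed. Since masking distributes over $\oplus$, one checks that in the real execution a successful match forces $\omega(\pi_{\mu_j},x_\mathtt{r})=\mu_j\oplus\tilde s_{i,j}$, so this programming reproduces exactly the value the honest $\mathcal{B}_j$ reads, while a failing match yields a point distinct from the table preimage; hence the simulated matching decisions agree with the real ones. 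All of $\mathcal{E}$'s direct oracle queries are then answered lazily and consistently with the programmed points.

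\emph{Main obstacle and conclusion.} The heart of the argument is bounding the bad events that separate the two worlds, and I expect this to be the hard part. The dangerous event is that $\mathcal{E}$ queries $\mathbb{H}$ at an \emph{unopened} preimage $\mu_j\oplus\tilde s_{i,j}$ of a failing match, which would expose the entry as a genuine hash value or even leak $\mu_j$: because $s_i$ is uniform and secret and $\text{wt}(\pi_{\mu_j})\ge\delta$, this preimage is uniform on at least $\delta$ bits in $\mathcal{E}$'s view, so each query hits it with probability $\le 2^{-\delta}$, and a union bound over the $\le l\,|M|$ cells and the polynomially many queries makes the total negligible once $\delta$ is large. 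I would combine this with the standard birthday bound ruling out collisions among the $\le l\,|M|$ table and programmed points. This is the step that consumes both hypotheses: the $2^{\delta}$ brute-force barrier needs $\delta=\min_\mu\text{wt}(\pi_\mu)$ large, and the one-time use of blinds needs the $l$-input bound. Assembling these through a short hybrid sequence---the real execution, then randomized shares, then a random $\tilde S$ with a programmed oracle, then the ideal execution---gives $\textsc{real}(\Pi,\mathcal{E})\approx_{\text{c}}\textsc{ideal}(\psi,\mathcal{S})$. Finally, the output messages $(\alpha'_j,\beta'_j)$ are deterministic XORs of the simulated random shares along the match-determined path, so $O$ is reproduced with the correct joint distribution, and since $X$ and $\psi(X)$ are never exposed to $B$ while the leaked match bits are reproduced consistently, the full tuple $(I,O,X,\psi(X),D)$ matches.
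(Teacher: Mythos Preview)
Your proposal is correct and follows essentially the same simulator-based strategy as the paper: random action and projection shares for the corrupted parties, a fake $\tilde{S}$ filled with uniform $q$-bit strings, and an $x_\mathtt{r}$ made consistent with the leaked match bits through the random oracle. The only cosmetic difference is the order of sampling---the paper pre-samples the oracle preimages in a private table $\hat{S}$ and then assembles $x_\mathtt{r}$ from them, whereas you sample $x_\mathtt{r}$ first and program the oracle afterward---and your explicit bad-event bound and hybrid decomposition are actually more careful than the paper's proof, which simply asserts indistinguishability for polynomially many oracle queries.
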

\begin{proof}
Let $\mathcal{R}: \{0, 1\}^* \rightarrow \{0, 1\}^{q}$ denote the random oracle. Before receiving any packet, the simulator $\mathcal{S}$ simulates the lookup table $\tilde{S}$ as follows. For each $m \in M$, given the projection $\pi_\mu$ of its match $\mu$, it generates $l$ binary strings by sampling a random bit where $\pi_\mu(i) = 1$ and placing a $0$ otherwise. For each such string, $\mathcal{S}$ samples a uniform random binary string of length $q$. $\mathcal{S}$ creates two tables. One is the lookup table $\tilde{S}$, and the other its personal table $\hat{S}$. The table $\hat{S}$ contains the pre-images of the entries in $\tilde{S}$. It hands over $\tilde{S}$ to each party in $\mathcal{E}$. For each policy $(m, a) \in \psi$, it generates $|\mathcal{E}|$ random binary strings $\alpha_j$ and $\beta_j$ of length $n$, for $1 \le j \le |\mathcal{E}|$, and gives each pair $(\alpha_j, \beta_j)$ to a separate player in $\mathcal{E}$. $\mathcal{S}$ initiates a counter $i$ initially set to $0$.

Upon receiving the result of the matching functions in $M$ from $\mathcal{T}$, indicating the arrival of a new packet, $\mathcal{S}$ first generates a random binary string as $x_{\mathtt{w}}$ and $|\mathcal{E}|$ random binary strings of length $n$ (to simulate the $r_j$'s). $\mathcal{S}$ initializes an empty string $x_{\mathtt{r}}$. For each matching function $m$ that outputs 1, $\mathcal{S}$ looks up its table $\hat{S}$ and the projection $\pi_{\mu}$, where $\mu$ is the match of the matching function, and replaces the corresponding bits of $x_{\mathtt{r}}$ with the corresponding bits of the input string to the lookup table $\hat{S}$. Finally, for all bits of $x_{\mathtt{r}}$ that are not set, $\mathcal{S}$ replaces them with uniform random bits. It hands over $x_{\mathtt{w}}$, $x_{\mathtt{r}}$ and $r_j$ to each party in $\mathcal{E}$, together with the current counter value $i$. 

For any oracle query from a party $\mathcal{B}_j \in \mathcal{E}$, $\mathcal{S}$ first looks at its table $\hat{S}$ and sees if an entry exists. If an entry exists, $\mathcal{S}$ outputs the corresponding output from the table $\hat{S}$. If an entry does not exist, $\mathcal{E}$ outputs a uniform random string of length $q$, and stores the input and the output by appending it to the table $\hat{S}$. 

It is easy to see that the distribution of the variables $(I, O, X, \psi(X), D)$ for each party in $\mathcal{E}$ is the same as in the real setting, for any $\mathcal{E}$, such that $|\mathcal{E}| < t$, for any value of the counter $i \le l$, and for a polynomial in $\delta$ number of oracle queries. Therefore $\textsc{real}(\Pi, \mathcal{E})	 \approx_{\text{c}} \textsc{ideal}(\psi, \mathcal{S})$.
\end{proof}

\fi

\end{document}